\newtheorem{proposition}{Proposition}
\newtheorem{theorem}{Theorem}
\newtheorem{definition}{Definition}
\newtheorem{lemma}{Lemma}
\newtheorem{remark}{Remark}
\newcommand{\E}{{\mathbb{E}}}
\begin{document}

\title{Optimal Compression of Unit Norm Vectors \\in the High Distortion Regime}
\author{%
  \IEEEauthorblockN{Heng Zhu$^*$, Avishek Ghosh$^\dagger$ and Arya Mazumdar$^*$}
  
  \IEEEauthorblockA{$^*$University of California, San Diego \\
  $^\dagger$Indian Institute of Technology, Bombay \\
  Emails: hez007@ucsd.edu, avishek$\_$ghosh@iitb.ac.in, arya@ucsd.edu
}
}

 \maketitle

\begin{abstract}
  Motivated by the need for communication-efficient distributed learning, we investigate the method for compressing a unit norm vector into the minimum number of bits, while still allowing for some acceptable level of distortion in recovery. This problem has been explored in the rate-distortion/covering code literature, but our focus is exclusively on the "high-distortion" regime. We approach this problem in a worst-case scenario, without any prior information on the vector, but allowing for the use of randomized compression maps. Our study considers both biased and unbiased compression methods and determines the optimal compression rates. It turns out that simple compression schemes are nearly optimal in this scenario. While the results are a mix of new and known, they are compiled in this paper for completeness. 
\end{abstract}

\section{Introduction}
\label{sec:intro}
Data compression is an active area of research in signal processing, specially in the context of image, audio and video processing \cite{gersho2012vector,salomon2010handbook,sayood2017introduction}. Apart from the classical applications, in modern large scale learning systems, like federated learning \cite{MAL-083,konevcny2016federated,mcmahan2017communication}, data compression plays a crucial role. Federated learning (FL) is a distributed learning paradigm, with one center and several (in millions or even more) client machines, and exchange of information between the server and the clients is crucial for learning models. 
In a generic federated learning framework, we have each of a large number of clients computing local stochastic gradients of a loss function and sending it to the server. The server averages the received stochastic gradients and updates the model parameters.
In modern machine learning, one necessarily needs to deal with extreme high dimensional data %
($d$ is very large) and train a large neural network with billions (if not more) of parameters. Hence, one of the major challenges faced by FL is communication cost between the client machines and the server since this cost is associated with the internet bandwidth of the users which are often resource constraints~\cite{MAL-083}.


A canonical way to reduce the communication cost in FL is to compress (sprasify or quantize) the data (gradients) before communicating, and this forms the basis of our study. Indeed, data compression is widely used in FL systems (\cite{alistarh2017qsgd,stich2018sparsified,karimireddy2019error,ghosh2021communication}) and several compression schemes, both deterministic as well as randomized, have been proposed in the last few years. Broadly, the compression schemes used in FL falls under two categories: (i) unbiased (\cite{alistarh2017qsgd,stich2018sparsified}) and (ii) biased (\cite{karimireddy2019error,stich2018sparsified,ghosh2021communication}), where for unbiased compressor, we require conditions on first and second moment (see Definition~\ref{def unbiased}) and for the biased one, we just require a condition on second moment (see Definition~\ref{def biased}). We now define them formally:

\begin{definition}[Unbiased $\omega$-compressor]
\label{def unbiased}
A randomized operator $\mathcal{Q}$: $\mathbb{R}^d \rightarrow \mathbb{R}^d$ is an {\emph{unbiased $\omega$-compressor}} if it satisfies
\begin{align}
\label{unbiased}
    \E_{\mathcal{Q}}[\mathcal{Q}(x)] =& x,  \notag \\
    \E_{\mathcal{Q}} \left \|\mathcal{Q}(x) - x\right \|^2 \le & (\omega-1) \left \|x\right \|^2, \quad \forall x \in \mathbb{R}^d, \notag
\end{align}
where $\omega \geq 1$ is the (unbiased) compression parameter.
\end{definition}

Typical unbiased compressors include:
\begin{itemize}
    \item Randomized Quantization in QSGD \cite{alistarh2017qsgd}: For any real number $r \in [a,b]$, with probability $\frac{b-r}{b-a}$  quantize $r$ into $a$, and with probability $\frac{r-a}{b-a}$  quantize $r$ into $b$.
    \item Rand-$k$ sparsification \cite{wangni2018gradient}: For any $x \in \mathbb{R}^d$, randomly select $k$ elements of $x$ to be scaled by $\frac{d}{k}$, and let the other elements be zero.
\end{itemize}

A more general definition including biased compressor is given by the following:
\begin{definition}[Biased $\delta$-compressor]
\label{def biased}
A randomized operator $\mathcal{Q}$: $\mathbb{R}^d \rightarrow \mathbb{R}^d$ is a $\delta$-compressor if it satisfies
\begin{align}
   \E_{\mathcal{Q}} \left \|\mathcal{Q}(x) - x\right \|^2 \le & (1-\delta)\left \|x\right \|^2, \quad \forall x \in \mathbb{R}^d, \notag
\end{align}
where $\delta \in [0,1]$ is the (biased) compression parameter. If the compressor is not random, we remove the expectation.
\end{definition}

Typical biased compressors include:
\begin{itemize}
    \item $\ell_1$-sign quantization \cite{karimireddy2019error}: For any $x \in \mathbb{R}^d$, $\mathcal{Q}(x) = \frac{\left \|x \right \|_1}{p}{\rm sign}(x)$. Here $\delta$ is $\frac{\left \|x \right \|_1^2}{p\left \|x \right \|}$.
    \item Top-$k$ sparsification \cite{stich2018sparsified}: For any $x \in \mathbb{R}^d$, select $k$ elements with the largest absolute values to be remained, and let the other elements to be zero. Here $\delta$ is $\frac{k}{d}$.
\end{itemize}
\textbf{Ranges of $\omega$ and $\delta$:} The parameters $\omega$ and $1-\delta$ measure the amount of distortion of the compressors. We emphasize that for the biased compressor, the $\delta$ is in the range of $[0,1]$, while in the unbiased case,  $\omega$ is in the range of $[1,+\infty)$. Owing to unbiasness, the variance of the compressor often increases and we require $\omega$ to be large in such settings \cite{alistarh2017qsgd,gandikota2021vqsgd}. Thus there is a fundamental difference between definition of biased $\delta$-compressor and unbiased compressor. In this work we consider the high distortion regime, where $\delta$ is small and $\omega$ is large, again motivated by FL applications where extreme compression is desirable. 

We aim to find the optimal communication cost of the above-mentioned widely used compressors. 
Since, in standard applications of FL, one can normalize the information vector and sends the norm separately, we can, without loss of generality, assume that the compressors are applied to the unit vector $x$, i.e., $\|x\|^2=1$ holds in the Definition \ref{def unbiased} and \ref{def biased}.

\subsection{Our contributions}
\label{sec:contri}
Motivated by the compression needs in FL, in this paper, we study the basic problem of compressing a unit norm vector in high dimension. Though this problem has been extensively explored in the context of rate-distortion and covering code \cite{berger2003rate,gersho2012vector,cohen1997covering}, we are interested in the high distortion regime, where the recovery error is high, and the compression rate $\to 0$. Moreover, we do not put any prior information on the vector to be compressed and hence our results hold for the worst-case setting.

We first obtain lower bounds on the number of bits required to transmit such unit norm vectors under biased as well as unbiased compression schemes. For the biased schemes, this follow directly via a sphere covering argument and for the unbiased case,  {we can leverage some results from \cite{chen2020breaking,chen2022fundamental}}. Moreover, we propose and analyze several efficient algorithms that matches this lower bound and hence optimal.  It turns out that for unbiased $\omega$-compressor the minimum number of bits required is $\Omega(d/\omega)$, whereas for biased $\delta$-compressor the least number of bits required is $\Omega(d \delta)$. For upper bounds, we  provide the following different compressors:

\paragraph{An optimal but inefficient biased $\delta$-compressor} In Section~\ref{sec:gauss_delta}, we propose and analyze an unbiased $\delta$-compressor based on generating Random Gaussian codebook. In this scheme, out of the generated codebook, we choose the Gaussian vector closest (in $\ell_2$ norm) to the information vector as the quantized vector. This scheme only requires ${O}(d\delta)$ number of bits, which matches the lower bound. Although optimal, the encoding is an exhaustive search in  exponentially large number of possible codewords, and hence this algorithm is not computationally efficient.

\paragraph{A near-optimal and efficient biased $\delta$-compressor} In Section~\ref{sec:efficient_biased}, we propose an efficient biased compressor that is (near) optimal, namely Max Block Norm Quantization (MBNQ). In this algorithm, we first find the sub-block of length $k$ that has the largest norm, and then use scalar quantization (coordinate-wise) on the sub-vector. As shown in Theorem~\ref{theorem_mbnq}, MBNQ requires ${O}(d\delta\log (d\delta))$ number of bits. Furthermore, MBNQ is computationally efficient as seen in Algorithm~\ref{MBNQ}. 


\paragraph{A near-optimal and efficient biased $\omega$-compressor} In Section~\ref{sec:inefficeint_unbiased}, we discuss the vector quantized SGD (VQSGD) algorithm of \cite{gandikota2021vqsgd}, which requires $\mathcal{O}(d/\omega)$ number of bits and hence optimal. However, similar to the Random Gaussian codebook algorithm, this is also inefficient, and the computation complexity scales exponentially with dimension $d$. In Section~\ref{sec:efficient_unbiased}, we propose an efficient algorithm, namely Sparse Randomized Quantization Scheme (SRQS), that first applies the rand-$k$ compressor (of \cite{stich2018sparsified}) and then uses QSGD (of \cite{alistarh2017qsgd}) on the sparse $k$-length sub-vector. As shown in Theorem~\ref{theorem_srqs}, this simple combination yields an efficient algorithm requiring ${O}(d/\omega)\log(d \omega)$ bits , and hence SRQS is (near) optimal.

Our contribution is summarized in Table 1.
\begin{table}[t!]
    \centering
    \caption{Summary of findings: \\
    LB: Lower Bounds, Random: Random Gaussian Codebook Scheme and VQSGD \cite{gandikota2021vqsgd}, Sparse: Max Block Norm Quantization and Sparse Randomized Quantization Scheme}
    \begin{sc}
    \begin{tabular}{cccc}
        \hline
        Compressor & LB & Random & Sparse\\
         \hline
        Biased & $\Omega(d\delta)$ & $O(d\delta)$ & $O(d\delta \log d\delta)$ \\
        Unbiased & $\Omega\left(\frac{d}{\omega}\right)$ \cite{chen2020breaking} & $O\left(\frac{d}{\omega}\right)$ \cite{gandikota2021vqsgd} & $O\left(\frac{d}{\omega} \log d\omega \right)$ \\
        \hline
    \end{tabular}
\end{sc}
\end{table}

\section{Related Work}
\label{sec:related_work}

\subsection{$\epsilon$-nets}
Note that, our biased compressors are just $\epsilon$-nets for the unit sphere ($\epsilon = 1-\delta$), and it is known that there exists an $\epsilon$-net of size $(1+2/\epsilon)^d$~\cite{vershynin2010introduction}. However, results on $\epsilon$-nets such as this are tailored for $\epsilon$ very small; for $\epsilon$ very close to 1, they do not provide a fine-grained dependence on $\delta$.
\subsection{Quantization}
Possibly, a straightforward way to cut down communication is to use quantization or sparsification techniques. Since, typically the dimension of such data is huge, dimension reduction techniques often turn out to be useful. To achieve compression, one can quantize each coordinate of a transmitted vector into few bits \cite{alistarh2017qsgd,wen2017terngrad,zhang2017zipml}, or obtain a sparser vector by letting some elements be zero \cite{stich2018sparsified,wangni2018gradient,konevcny2018randomized}. These compressors are either biased and unbiased, influencing the convergence of learning algorithms. {In \cite{mayekar2020ratq,mayekar2020limits}, a lower bound of compression budget is obtained to retain the convergence rate of stochastic optimization of the uncompressed setting. To achieve the lower bound, a compression scheme based on the random rotation and block quantization was proposed. In context of distributed mean estimation, a lower bound of estimation accuracy is derived given a compression budget \cite{suresh2017distributed}, which is based on a distributed statistical estimation lower bound. Also a random rotation scheme was proposed to approach the lower bound.}


\subsection{Federated Learning and Communication Cost}
As mentioned in the introduction, 
FL algorithms employ several techniques to reduce the communication cost. One simple way is to use local iterations and communicate to the server sparsely~\cite{mcmahan2017communication,stich2019local,karimireddy2020scaffold}. 
Another way to reduce number of communication rounds is to use second order Newton based algorithm \cite{dane,giant,ghoshsecond,ghoshcubic}, which exploits the compute power of the client machines and cut the communication cost. 

\section{Biased $\delta$-Compressors}

We start with the biased $\delta$-compressor. We first answer the question of minimum communication cost of $\delta$-compressor by providing a lower bound, which is provided by the sphere covering.
\subsection{Lower Bound}
\label{sec:lowerbound}

\begin{proposition}
\label{theorem_biased}
    Let $Q(v)$ be the compressor of $v$ satisfying Definition \ref{def biased}. Then the number of bits $b$ required to transmit $Q(v)$ satisfy
    \begin{align}
        b \geq d\delta+ \log d.
    \end{align}
\end{proposition}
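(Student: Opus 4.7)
The approach is a sphere covering argument. Let $\mathcal{C} \subseteq \mathbb{R}^d$ denote the set of all possible outputs of the compressor $Q$. Since the $b$ transmitted bits are decoded by a fixed map, $|\mathcal{C}| \leq 2^b$ irrespective of any randomness inside $Q$. For every $v \in S^{d-1}$, the realized output $Q(v)$ lies in $\mathcal{C}$ almost surely, so
\[
\min_{c \in \mathcal{C}} \|v - c\|^2 \;\leq\; \E_Q \|Q(v) - v\|^2 \;\leq\; 1 - \delta.
\]
Consequently, the Euclidean balls $\{B(c,\sqrt{1-\delta})\}_{c\in\mathcal{C}}$ cover the unit sphere $S^{d-1}$, and it suffices to lower bound the number of such balls needed for a cover.

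The second step is to translate the Euclidean-ball cover into a spherical-cap cover of controllable measure. For a center $c$ with $\|c\| = s$,
\[
B(c,\sqrt{1-\delta}) \cap S^{d-1}
\;=\;
\Bigl\{v \in S^{d-1} : \bigl\langle v, c/\|c\| \bigr\rangle \geq \tfrac{s^2 + \delta}{2s}\Bigr\}.
\]
By AM--GM, the threshold $(s^2+\delta)/(2s) = s/2 + \delta/(2s)$ is minimized at $s = \sqrt{\delta}$, giving threshold $\sqrt{\delta}$. Hence no ball of radius $\sqrt{1-\delta}$ contributes a spherical cap larger than $C^\star := \{v \in S^{d-1} : \langle v, u\rangle \geq \sqrt{\delta}\}$, for some fixed unit vector $u$. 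Covering $S^{d-1}$ therefore forces $|\mathcal{C}| \cdot \mu(C^\star) \geq 1$, where $\mu$ is normalized surface measure on $S^{d-1}$.

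The final step is to bound $\mu(C^\star)$. A sub-Gaussian tail estimate on the sphere immediately yields $\mu(C^\star) \leq \exp(-d\delta/2)$, which already gives $b = \Omega(d\delta)$. To squeeze out the extra $\log d$ term, I would use the exact expression $\mu(C^\star) = \tfrac{1}{2} I_{1-\delta}\!\bigl(\tfrac{d-1}{2}, \tfrac{1}{2}\bigr)$ in terms of the regularized incomplete beta function and apply Stirling to extract a polynomial prefactor of order $(d\delta)^{-1/2}$. Combined with $|\mathcal{C}| \geq 1/\mu(C^\star)$ and $b \geq \log_2|\mathcal{C}|$, this produces the claimed bound $b \geq d\delta + \log d$ after absorbing universal constants. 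The main obstacle is precisely this constant-tracking in the cap-measure estimate: the exponential rate $\exp(-d\delta/2)$ falls out of a one-line Chernoff-type argument, but recovering the additive $\log d$ requires the non-asymptotic polynomial prefactor, via either the beta-function identity plus Stirling or a direct surface-area comparison. The structural steps (the fixed codebook $\mathcal{C}$ and the optimization over $s$) are routine.
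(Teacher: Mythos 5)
Your structural steps are sound, and in fact more careful than the paper's own argument: the paper also runs a sphere-covering bound, but it takes the ball around $v$ to have radius $1-\delta$ and compares the surface area of $S^{d-1}$ with the full $d$-dimensional volume of that ball, whereas you correctly fix a codebook of size $2^b$, use $\min_c\|v-c\|^2\le \E_Q\|Q(v)-v\|^2$ to handle randomization, reduce to spherical caps, and optimize over the center norm $s$ to identify the worst cap $C^\star=\{v:\langle v,u\rangle\ge\sqrt{\delta}\}$.

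The genuine gap is in your final quantitative step. With covering radius $\sqrt{1-\delta}$ (the faithful reading of Definition~\ref{def biased}, since $\E\|Q(v)-v\|^2\le 1-\delta$), the cap estimate you plan to use gives $\mu(C^\star)=\tfrac12 I_{1-\delta}\bigl(\tfrac{d-1}{2},\tfrac12\bigr)\asymp (1-\delta)^{(d-1)/2}/\sqrt{d\delta}$, so the most your argument can yield is
\begin{align}
b \;\ge\; \log\frac{1}{\mu(C^\star)} \;\approx\; \frac{d-1}{2}\log\frac{1}{1-\delta}+\frac12\log(d\delta)-O(1)\;\approx\;\frac{d\delta}{2}+\frac12\log(d\delta), \notag
\end{align}
which misses the stated bound by a factor of $2$ in the leading term and replaces $\log d$ by $\frac12\log(d\delta)$ (a shortfall of at least $\frac12\log d+\frac12\log(1/\delta)$). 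Neither deficit is a ``universal constant'' that Stirling-type bookkeeping on the incomplete beta function can absorb: the beta-function route gives exactly the asymptotics above. The paper reaches $d\delta+\log d$ only because it uses balls of radius $1-\delta$, i.e.\ it effectively reads the constraint as $\|Q(v)-v\|\le 1-\delta$; under that reading your own optimization over $s$ would give the cap threshold $\sqrt{2\delta-\delta^2}$ and hence exponent $d\log\frac{1}{1-\delta}\ge d\delta$. So as written your plan proves $b=\Omega(d\delta)$ (the right order, matching the $O(d\delta)$ Gaussian-codebook upper bound) but not the specific inequality $b\ge d\delta+\log d$; indeed, under the squared-distortion reading a covering code of roughly $\frac{d\delta}{2}$ nats exists for small $\delta$, so no sharpening of your cap estimate can close the gap --- the discrepancy comes from the interpretation of the radius, not from your constant tracking.
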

\begin{proof}
    The proof employs a simple sphere covering of the unit sphere $S^{d-1}$~\cite{cohen1997covering}. 
    We know that $\mathcal{Q}{(v)}$ resides within the ball $B_d(v,1-\delta)$. The cardinality of the set of balls covering the sphere $S^{d-1}$ is at least
    \begin{align}
        C = \frac{\text{vol}(S^{d-1})}{\text{vol}(S^{d-1} \cap B_d(v,1-\delta))} \geq \frac{\text{vol}(S^{d-1})}{\text{vol}(B_d(v,1-\delta))} \notag
    \end{align}
    We can send the indices of these balls covering the unit sphere. Thus the necessary number of bits $b$ to represent $\mathcal{Q}(v)$ satisfies
    \begin{align}
        b = & \log C \geq \log \frac{\text{vol}(S^{d-1})}{\text{vol}(B_d(v,1-\delta))} = \log \frac{2\pi^{\frac{d}{2}}/\Gamma(\frac{d}{2})}{[\pi^{\frac{d}{2}}/\Gamma(\frac{d}{2}+1)] (1-\delta)^d} \notag \\
        = & \log \frac{d}{(1-\delta)^d} = \log d - d\log (1-\delta) \geq \log d + d\delta, \notag
    \end{align}
    where $\Gamma(\cdot)$ is the Gamma function, the last inequality is from the fact $\log(1+x) \leq x$ for $x>-1$.
\end{proof}
Considering that $\delta$ falls within the range of $[0,1]$, the required number of bits is smaller than transmitting the uncompressed vector directly, which would require $O(d)$ bits.

In the following we first present a random Gaussian codebook scheme that achieves the aforementioned lower bound. However, this scheme incurs significant computational and storage requirements, rendering it infeasible in practice. Consequently, we introduce an alternative practical sparse quantization scheme which is nearly optimal.

\subsection{Random Gaussian Codebook Scheme}
\label{sec:gauss_delta}
We use a vector quantization method via random Gaussian construction: let a random Gaussian matrix $A \in \mathbb{R}^{d\times n}$ be
\begin{align}
\label{gaussianmatrix}
    A = \frac{1}{\sqrt{N}} \left [ a_1, \dots, a_n \right ] 
\end{align}
where each column $a_i \in \mathbb{R}^{d}$ is a Gaussian vector with each element being standard Gaussian variable $\mathcal{N}(0,1)$. $N$ is the normalization factor, that will be chosen later. The columns of Gaussian matrix $A$ are regarded as a codebook for compression.

For a unit norm vector $v$ to be compressed, we find the nearest Gaussian vector
\begin{align}
    a_\text{min} = \underset{\{a_i\}_{i=1}^n}{\arg\min} \left \| v - \frac{1}{\sqrt{N}}a_i \right \|^2. \notag
\end{align}
Then we map $v$ to the Gaussian vector
\begin{align}
    \mathcal{Q}(v) = \frac{1}{\sqrt{N}} a_\text{min} .\notag
\end{align}

To transmit the compressed vector, we can only send the index of the nearest Gaussian vector $i_\text{min}$.  The process of random Gaussian scheme is described in Algorithm \ref{RGCS}.

\begin{algorithm}[tb]
\caption{Random Gaussian Codebook Scheme}
\label{RGCS}
\textbf{Input}: Unit norm vector $v$; A matrix $A \in \mathbb{R}^{d\times n}$ constructed as (\ref{gaussianmatrix}) \\
\textbf{Encoding}:  
\begin{algorithmic}[1]
\STATE Calculate the distance between $v$ and all the Gaussian vectors $\{ a_i\}_{i=1}^n$ : $ \text{dist}_i = \left \| v - \frac{1}{\sqrt{N}}a_i \right \|^2$
\STATE Find the smallest distance and corresponding index $i_\text{min}$
\end{algorithmic}
\textbf{Output}: Index of the Gaussian vector $i_\text{min}$
\end{algorithm}

For this scheme, we have the following guarantee:
\begin{theorem}
\label{theorem_gauss}
    A Gaussian codebook constructed as (\ref{gaussianmatrix}) with $N=\frac{d}{\delta}$ of size 
    $n = \exp( O(d\delta))$, is
    a $\delta$-compressor 
    with  probability at least $P_\delta= 1 -  \exp \left[ -\frac{n}{2\sqrt{2\pi}(2+t)\sqrt{d\delta}} \exp(-\frac{(2+t)^2 d\delta}{8}) + \frac{2d}{\sqrt{1-\delta}} \right]= 1- o(1)$. The number of bits needed is
    \(
        b = \log_2 n = O(d\delta). 
    \)
\end{theorem}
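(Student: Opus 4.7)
The plan is to combine a single-codeword Gaussian tail bound with a covering-net union bound over the unit sphere.

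First, I would fix a unit vector $v$ and analyze a single codeword. Expanding
\[
\left\|v - \tfrac{1}{\sqrt{N}} a_i\right\|^2 = 1 - \tfrac{2}{\sqrt{N}}\langle v, a_i\rangle + \tfrac{1}{N}\|a_i\|^2,
\]
the event ``$a_i$ achieves distortion at most $1-\delta$'' is equivalent to $\langle v, a_i\rangle \ge \tfrac{\sqrt{N}}{2}\bigl(\delta + \|a_i\|^2/N\bigr)$. A standard chi-square concentration bound gives $\|a_i\|^2 \le d(1+t)$ with high probability; combined with $N=d/\delta$, this reduces the condition to $\langle v, a_i\rangle \ge (2+t)\sqrt{d\delta}/2$. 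By rotational invariance, $\langle v, a_i\rangle \sim \mathcal{N}(0,1)$ and is independent of the component of $a_i$ orthogonal to $v$ (hence of the norm condition above), so the two events factor cleanly.

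Next, I would apply the Mills-ratio Gaussian lower tail bound $\Pr(Z \ge x) \ge \tfrac{1}{x\sqrt{2\pi}}\cdot\tfrac{x^2}{1+x^2}e^{-x^2/2}$ at $x=(2+t)\sqrt{d\delta}/2$ to produce a single-codeword success probability $p$ of order $\tfrac{1}{(2+t)\sqrt{d\delta}}\exp\!\bigl(-(2+t)^2 d\delta/8\bigr)$, which matches the exponent in the theorem's stated bound. Because the $n$ codewords are mutually independent, the probability that every codeword fails for this fixed $v$ is at most $(1-p)^n \le \exp(-np)$, giving the leading negative term in the claimed $P_\delta$.

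Third, I would upgrade this pointwise statement to one holding uniformly over all $v \in S^{d-1}$ via an $\epsilon$-net $\mathcal{N}_\epsilon$ of cardinality at most $(3/\epsilon)^d$. Applying the previous step to every net point contributes a union-bound factor whose log, for an appropriate choice of $\epsilon$ depending on $\delta$, matches the additive $2d/\sqrt{1-\delta}$ term. To extend from a net point $v'$ to a nearby $v$ with $\|v-v'\|\le\epsilon$, I would use the triangle inequality $\|v-a_i/\sqrt{N}\|\le\|v-v'\|+\|v'-a_i/\sqrt{N}\|$ and arrange for the net points to achieve a slightly tighter distortion than $1-\delta$, absorbing the slack into the free parameter $t$.

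The main obstacle will be calibrating the three parameters $t$, $\epsilon$, and $N$ so that simultaneously (i) the Mills-ratio step yields the stated exponent $(2+t)^2 d\delta/8$, (ii) the covering-number cost $\log|\mathcal{N}_\epsilon|$ is dominated by the single-codeword failure exponent $np$, and (iii) the triangle-inequality slack from the net is consistent with the target distortion. Once these are balanced, choosing $n=\exp(\Theta(d\delta))$ makes $np$ dominate $2d/\sqrt{1-\delta}$, so the failure probability is $o(1)$ as claimed, and the bit budget $\log_2 n = O(d\delta)$ follows immediately.
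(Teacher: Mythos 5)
Your overall route is the same as the paper's: expand the squared distance, control $\|a_i\|^2$ with a chi-square tail bound, lower-bound the Gaussian tail of $\langle v,a_i\rangle$ at the threshold $(2+t)\sqrt{d\delta}/2$, use independence of the $n$ codewords to get the $\exp(-np)$ term, and finish with an $\epsilon$-net union bound. Two of your intermediate claims, however, do not hold as stated. First, $\langle v,a_i\rangle$ is not independent of the event $\{\|a_i\|^2\le d(1+t)\}$, because $\|a_i\|^2=\langle v,a_i\rangle^2+\|P_{v^\perp}a_i\|^2$ contains the component along $v$; so the success event and the norm event do not ``factor cleanly.'' The fix is what the paper does: bound the single-codeword failure probability by $\Pr\left(\|a_i\|^2>d(1+t)\right)+\Pr\left(\langle v,a_i\rangle<(2+t)\sqrt{d\delta}/2\right)$, i.e., success probability at least $\Pr\left(z\ge (2+t)\sqrt{d\delta}/2\right)-2\exp(-dt^2/8)$, and then choose $t$ with $(2+t)^2\delta<t^2$ so the chi-square term is dominated; no independence is needed.

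Second, your net calibration is internally inconsistent. If you (correctly) require net points to achieve the tighter distortion $(\sqrt{1-\delta}-\epsilon)^2$ so that the triangle inequality covers every $v$ on the sphere, then the threshold for $z$ picks up an extra term of order $\epsilon\sqrt{d/\delta}$, and absorbing it into the free parameter $t$ forces $\epsilon=O(\delta)$; a net at scale $\Theta(\delta)$ has log-cardinality $\Theta(d\log(1/\delta))$, not $2d/\sqrt{1-\delta}$. So you cannot simultaneously perform the rigorous extension step and reproduce the exact probability expression in the theorem: that expression arises from the paper's choice $\epsilon=\gamma=\sqrt{1-\delta}$, for which the paper never carries out the extension from net points to arbitrary $v$ (a looseness in the paper itself). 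Your version, carried out with $\epsilon=\Theta(\delta)$, still yields the qualitative conclusion --- a $\delta$-compressor with probability $1-o(1)$ and $\log n=O(d\delta)$ whenever $d\delta$ dominates the lower-order $\log(1/\delta)$ and $\log d$ contributions --- but with $d\log(1/\delta)$ replacing $2d/\sqrt{1-\delta}$ in the exponent of the failure probability.
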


The proof is in the Appendix \ref{appen_biasedgauss}.
\begin{remark}
Since we use $O(d\delta)$ bits to represent the index of $n$ Gaussian vectors, the random Gaussian codebook scheme can achieve the lower bound of biased $\delta$-compressor.
However, it is impractical to store the large number of Gaussian vectors and perform a lot of distance computations at each iteration, where $n$ grows exponentially with model dimension $d$.   
\end{remark}

\subsection{Max Block Norm Quantization (MBNQ)}
\label{sec:efficient_biased}
To give a practical compression scheme, we show that a simple block quantization scheme can approach the lower bound, with an extra logarithmic factor. This scheme, Max Block Norm Quantization, uses standard scalar quantization on a sub-block with largest norm. 

Our scheme is first to uniformly partition the vector into sub-blocks $v_j \in \mathbb{R}^k$, $j=1,2,\dots,d/k$. Then we pick up the sub-block $v_{\max}$ with largest norm from the vectors $\{v_j\}_{j=1}^{d/k}$. 

We next quantize the $v_{\max}$ with standard scalar quantization $SQ(\cdot)$ with $l$ quantization levels to get the quantized vector $q = SQ(v_{\max})$. The quantization function is $SQ(x) = r_i$ if $x\in [low_i, up_i), i=1,\dots, l$; $low_i, up_i$ are the lower and upper bound of a quantization interval, {$r_i$ can be any quantization} value inside the quantization interval. That is, each element of $v_{\max}$ is quantized to the closest quantization value. The scalar quantization function is applied element-wise. The final compressor is shown as $\mathcal{Q}(v) = q$. The detailed process of MBNQ is described in Algorithm \ref{MBNQ}.

\begin{algorithm}[tb]
\caption{Max Block Norm Quantization (MBNQ)}
\label{MBNQ}
\textbf{Input}: Unit norm vector $v$ \\
\textbf{Initialization}: Define a scalar quantization function $SQ(\cdot)$ with $l$ quantization levels, where $SQ(x) = r_i$ if $x\in [low_i, up_i), i=1,\dots, l$, $low_i, up_i$ are the lower and upper bound of a quantization interval, $r_i$ is the quantization value inside the quantization interval
\begin{algorithmic}[1]
    \STATE Partition the vector $v$ into $d/k$ sub-vectors $v_j \in \mathbb{R}^k$, $j=1,2,\dots,d/k$
    \STATE Calculate the norms of sub-vectors $u_j = \|v_j\|^2$, then pick up the $v_\text{max}$ with largest norm $u_\text{max}$
    \STATE Use scalar quantization function to quantize $v_\text{max}$ $q=SQ(v_\text{max})$, where the quantization function is applied element-wise
\end{algorithmic}
\textbf{Output}: Compressed vector $q$
\end{algorithm}

For the MBNQ scheme, we have the following guarantee.
\begin{theorem}
\label{theorem_mbnq}
    With number of sub-blocks $k=2d\delta$ and 
    the number of quantization levels $l=\sqrt{4d\delta}$, the MBNQ is a $\delta$-compressor with the number of bits
    \begin{align}
        b =  d\delta \log (4d\delta) + \log \frac{1}{2\delta} .\notag
    \end{align}
\end{theorem}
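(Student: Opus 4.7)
The plan is to upper-bound $\|\mathcal{Q}(v)-v\|^2$ as the sum of two independent contributions: (i) the squared norm of the $d/k-1$ sub-blocks that are discarded, and (ii) the scalar-quantization error incurred on the single retained sub-block $v_{\max}$. Writing $v=(v_1,\dots,v_{d/k})$ with $v_j\in\mathbb{R}^k$ and $u_j=\|v_j\|^2$, we have $\sum_j u_j=1$, so by pigeonhole $u_{\max}\geq k/d$. This lower bound on $u_{\max}$ is the single fact that makes the scheme work.

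For (i), since $\mathcal{Q}(v)$ is implicitly zero outside the chosen block, the dropped blocks contribute exactly $1-u_{\max}$ to the squared error. For (ii), the key observation is that every coordinate of $v_{\max}$ has magnitude at most $\sqrt{u_{\max}}\leq 1$, so placing the $l$ quantization levels uniformly on $[-\sqrt{u_{\max}},\sqrt{u_{\max}}]$ makes the per-coordinate squared error at most $u_{\max}/l^2$ and the total at most $k u_{\max}/l^2$. Combining,
\begin{align*}
\|\mathcal{Q}(v)-v\|^2 \;\leq\; 1 - u_{\max}\!\left(1-\tfrac{k}{l^2}\right).
\end{align*}
Plugging in $k=2d\delta$ and $l^2=4d\delta$ makes the parenthetical factor equal to $1/2$, so the bound reduces to $1-u_{\max}/2 \leq 1-(k/d)/2 = 1-\delta$, which is precisely Definition~\ref{def biased}.

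The bit count is then a direct accounting: $\log(d/k)=\log(1/(2\delta))$ bits to name which of the $d/k$ sub-blocks attained the maximum norm, plus $k\log l = 2d\delta\cdot\tfrac{1}{2}\log(4d\delta) = d\delta\log(4d\delta)$ bits to encode the $k$ scalar quantization indices, summing to the claimed budget.

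The main subtlety I foresee concerns the quantizer range: the argument above implicitly assumes the $l$ levels are adapted to $[-\sqrt{u_{\max}},\sqrt{u_{\max}}]$, which costs a small overhead to communicate $u_{\max}$ (absorbed into the constant in the bit expression), or alternatively one could fix the range to $[-1,1]$ and re-derive the tuning of $k$ and $l$. A minor point is that $k=2d\delta$ and $l=\sqrt{4d\delta}$ should be read with ceilings, contributing only lower-order additive terms. The rest is routine, and I expect no difficulty beyond bookkeeping.
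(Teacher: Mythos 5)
Your proof is correct and follows essentially the same route as the paper's: the orthogonal split into dropped-block mass plus scalar-quantization error on the max-norm block, the pigeonhole bound $u_{\max}\geq k/d$, the choice $k=l^2/2$, and the identical bit accounting. Your remark about adapting the quantizer range to $\|v_{\max}\|$ (at the cost of sending one extra scalar) is a reasonable observation on a detail the paper's proof leaves implicit, but it does not change the argument.
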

\begin{proof}
    The compression error of our proposed scheme can be expressed as follows:
\begin{align}
\label{error_biased}
    \E \left \| v - Q(v) \right \|^2 \ = & \E \left \| v - \bar{v}_{\max} + \bar{v}_{\max} - q \right \|^2 \notag \\
    = & \E \left \| v - \bar{v}_{\max} \right \|^2 + \E \left \| \bar{v}_{\max} - q \right \|^2 ,
\end{align}
where $\bar{v}_{\max} \in \mathbb{R}^d$ represents a vector with $k$ elements equal to $v_{\max}$ and the remaining elements set to zero. The last equality is from the fact that $\bar{v}_{\max} - q$ is the vector with $k$ non-zero elements, and $v-\bar{v}_{\max}$ is the vector with the remaining $d-k$ non-zero elements, so that $\langle v-\bar{v}_{\max}, \bar{v}_{\max} - q \rangle$ would be zero.

The first term in (\ref{error_biased}) corresponds to the error resulting from the partitioning of the vector. Since $v_{\max}$ is a sub-vector of $v$, we have
\begin{align}
    \left \| v-\bar{v}_{\max} \right \|^2 = 1-u_{\max}^2. \notag
\end{align} 
where $u_{\max}$ denotes the norm of $v_{\max}$. The second term in (\ref{error_biased}) represents the error resulting from scalar quantization with $l$ levels. From the quantization error in scalar quantization, we have
\begin{align}
    \E \left \| \bar{v}_{\max} - q \right \|^2 \leq \frac{k}{l^2} \left \| \bar{v}_{\max} \right \|^2 = \frac{k}{l^2} u_{\max}^2. \notag
\end{align}
Here we choose $k = \frac{l^2}{2}$, then we can obtain
\begin{align}
    \E \left \| v - Q(v) \right \|^2 \leq & 1-u_{\max}^2 + \frac{k}{l^2}u_{\max}^2 =  1 - \frac{u_{\max}^2}{2}. \notag
\end{align}
Note the error decreases with a larger $u_{\max}$. Since we pick up the sub-vector with largest norm, we have $u_{\max}^2 \geq \frac{k}{d}$. Therefore we can obtain
\begin{align}
    \E \left \| v - Q(v) \right \|^2 \leq 1 - \frac{k}{2d}. \notag
\end{align}
Hence, the MBNQ scheme achieves the $\delta$-compressor with $\delta=\frac{k}{2d}$.

Regarding the communication cost of the MBNQ scheme, each worker needs to transmit the index of the sub-vector with the largest norm and $k$ quantized real numbers. Thus, the total number of required bits can be calculated as:
\begin{align*}
    b = & \log \frac{d}{k} + k\log l =  \log \frac{d}{k} + \frac{k}{2}\log (2k) =  \log \frac{1}{2\delta} + d\delta \log (4d\delta), \notag
\end{align*}
which represents the communication cost in Theorem \ref{theorem_mbnq}.
\end{proof}

\begin{remark}
    The implementation of MBNQ is straightforward and easy in real systems. And the simple MBNQ scheme can approach the lower bound of biased $\delta$-compressor, except for a logarithmic factor.
    If we directly apply scalar quantization on the whole vector $v$, the communication cost would be $O(d)$. The improvement is from the quantization on a sub-vector. When the size of sub-vector $k$ is chosen properly, the quantization can get nearly optimal performance.
\end{remark}

\section{Unbiased $\omega$-Compressors}

In this section we provide results on the unbiased compressor of definition \ref{def unbiased}. We start with a lower bound and a random codebook scheme to achieve this lower bound, both  following  from existing results. Next we propose a practical and simple sparse quantization scheme to approach the lower bound, except for a logarithmic factor.

\subsection{Lower Bound}
For unbiased compressor, \cite{chen2020breaking} already provided a lower bound under both communication and privacy constraints. The lower bound is proven by constructing a prior distribution on $v$ and analyzing the compression error. Here we directly give the lower bound in \cite{chen2020breaking}. The lower bound also follows from~\cite{gandikota2021vqsgd}.

\begin{theorem}[\cite{chen2020breaking}, Appendix D.2]
\label{theorem_unbiased}
    Let $Q(v)$ be the compressor of $v$ satisfying definition \ref{def unbiased}. Then the number of bits $b$ required to describe $Q(v)$ satisfies
    \begin{align}
        b = \Omega\left(\frac{d}{\omega}\right). 
    \end{align}
\end{theorem}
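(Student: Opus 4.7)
\medskip
\noindent\textbf{Proof proposal.} The plan is to reduce the one-shot unbiased compression problem to a multi-round mean-estimation problem and then invoke an information-theoretic (Fano-type) lower bound built on a packing of the sphere. The key point is that an unbiased $\omega$-compressor is really an unbiased estimator of $v$ whose variance can be driven down by repetition, so $b$ bits of communication must buy at least $\Omega(d/\omega)$ bits of ``information'' about $v$.

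\emph{Step 1 (hard prior via sphere packing).} Fix a constant $\epsilon\in(0,1)$ and let $\mathcal{V}=\{v_1,\ldots,v_M\}\subseteq S^{d-1}$ be a maximal $\epsilon$-separated packing of the unit sphere; the standard volumetric argument (as in Section~\ref{sec:lowerbound}) gives $\log M=\Theta(d)$. Let $V$ be drawn uniformly from $\mathcal{V}$. This is the ``prior distribution on $v$'' alluded to in the statement.

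\emph{Step 2 (amplification by independent repetitions).} Run the compressor $T$ times with independent internal randomness on the same input $V$ to produce $\mathcal{Q}_1(V),\ldots,\mathcal{Q}_T(V)$, and form $\hat V_T=\frac{1}{T}\sum_{t=1}^T\mathcal{Q}_t(V)$. Unbiasedness plus independence across rounds give $\E[\hat V_T\mid V]=V$ and
\begin{align*}
  \E\bigl\|\hat V_T-V\bigr\|^2 \;\leq\; \frac{\omega-1}{T}.
\end{align*}
Choosing $T=C\omega$ for a sufficiently large constant $C$ makes the right-hand side at most $\epsilon^2/16$, so Markov's inequality yields $\|\hat V_T-V\|\le \epsilon/2$ with probability at least $3/4$. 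Because the packing is $\epsilon$-separated, a nearest-codeword decoder then recovers $V$ with probability at least $3/4$.

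\emph{Step 3 (Fano / data-processing).} Conditionally on $V$, each $\mathcal{Q}_t(V)$ lies in a set of size at most $2^b$, so the joint transcript has entropy at most $bT$, and by data-processing
\begin{align*}
  I\bigl(V;\mathcal{Q}_1(V),\ldots,\mathcal{Q}_T(V)\bigr) \;\leq\; bT.
\end{align*}
Combining this with Fano's inequality applied to the successful decoder in Step 2 gives
\begin{align*}
  bT \;\geq\; (1-\tfrac14)\log M - 1 \;=\; \Omega(d),
\end{align*}
whence $b=\Omega(d/T)=\Omega(d/\omega)$, as required.

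\emph{Main obstacle.} The delicate step is the passage from the one-shot definition to the multi-round estimator: one has to argue that the compressor can legitimately be invoked $T$ times with independent seeds so that the variances add linearly (giving the $1/T$ decay), and that this replay does not help the encoder ``cheat'' the per-round $b$-bit budget. A secondary but routine annoyance is tracking the constants so that the packing radius $\epsilon$, the amplification factor $T=\Theta(\omega)$, and the Markov/Fano recovery probability all line up; once that bookkeeping is done, the $\Omega(d/\omega)$ bound falls out directly.
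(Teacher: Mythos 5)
Your argument is sound, but note that the paper itself offers no proof of Theorem~\ref{theorem_unbiased}: it imports the bound from \cite{chen2020breaking} (Appendix D.2), whose proof places a prior distribution on $v$ and analyzes the one-shot estimation error of any $b$-bit unbiased scheme directly, and remarks that the bound also follows from \cite{gandikota2021vqsgd}, where unbiasedness forces the convex hull of the codebook to contain the unit sphere while the variance constraint caps the codeword norms, giving a $2^{\Omega(d/\omega)}$ lower bound on codebook size. Your route is genuinely different: you amplify by $T=\Theta(\omega)$ independent invocations, so the averaged estimator has error $(\omega-1)/T$ conditionally on every $V$, Markov gives correct nearest-codeword decoding over an $\epsilon$-separated packing with $\log M=\Theta(d)$, and Fano together with $I\bigl(V;\mathcal{Q}_1(V),\dots,\mathcal{Q}_T(V)\bigr)\le bT$ (sub-additivity of entropy) yields $b=\Omega(d/\omega)$. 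These steps all check out, and the ``main obstacle'' you flag is not actually an obstacle: the compressor is a fixed randomized map satisfying Definition~\ref{def unbiased}, so replaying it with fresh internal randomness is legitimate in a lower-bound reduction, the cross terms vanish by unbiasedness so variances add, and the transcript of $T$ rounds carries at most $bT$ bits no matter how the encoder behaves. What your repetition-plus-Fano argument buys is a short, self-contained, elementary proof (packing, Markov, Fano); what it gives up relative to the cited proofs is one-shot structure and constants --- \cite{gandikota2021vqsgd}'s geometric argument bounds the codebook size directly and matches its construction, and \cite{chen2020breaking}'s prior-based analysis extends to privacy constraints and interactive protocols. Two small points of hygiene: the statement ``conditionally on $V$ each $\mathcal{Q}_t(V)$ lies in a set of size at most $2^b$'' should be phrased as ``each transmitted message is $b$ bits'' (which is the paper's model, and also covers shared-randomness encodings), and, as with all versions of this bound, the conclusion is vacuous unless $\omega=O(d)$.
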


  Note that, $\omega$ is in the range of $[1,+\infty)$. Thus the lower bound of $\Omega(\frac{d}{\omega})$ is smaller than directly transmitting the vector as $\Omega(d)$ real numbers.

\subsection{Unbiased Random Gaussian Codebook Scheme}
\label{sec:inefficeint_unbiased}
In \cite{gandikota2021vqsgd}, a randomized vector quantization framework is proposed for unbiased compression. Among the quantization methods introduced in \cite{gandikota2021vqsgd}, there is a random Gaussian codebook method to achieve the lower bound of unbiased compressor.

Similar to the  Random Gaussian Codebook Scheme in the biased case, we first construct a random Gaussian matrix $A \in \mathbb{R}^{d\times n}$:
\begin{align}
\label{unbiasedgaussianmatrix}
    A = \frac{1}{\sqrt{N}} \left [ a_1, \dots, a_n \right ] 
\end{align}
where each column $a_i \in \mathbb{R}^{d\times 1}$ is a Gaussian vector with elements as standard Gaussians $\mathcal{N}(0,1)$. $N$ is the normalization factor. The columns of Gaussian matrix $A$ are regarded as a codebook for compression, i.e., the points in $C$ are $c_i = \frac{1}{\sqrt{N}} a_i, \quad i = 1,\dots,n$.
Then we use the Gaussian vectors to perform the compression:
\begin{align}
    Q(v) = \frac{1}{\sqrt{N}} a_i, \text{with probability} \quad p_i. \notag
\end{align}
The detailed process of unbiased random Gaussian codebook scheme is described in Algorithm \ref{URGCS}.

\begin{algorithm}[tb]
\caption{Unbiased Random Gaussian Codebook Scheme}
\label{URGCS}
\textbf{Input}: Unit norm vector $v$; A matrix $A \in \mathbb{R}^{d\times n}$ constructed  as (\ref{unbiasedgaussianmatrix})\\
\textbf{Encoding}: 
\begin{algorithmic}[1]
\STATE Construct a linear convex combination of Gaussian vectors $\{\frac{1}{\sqrt{N}} a_i \}_{i=1}^n$ to get $v$: $v = \sum\limits_{i=1}^n \frac{1}{\sqrt{N}} p_i a_i$
\STATE Randomly choose from $\{a_i\}_{i=1}^n$ with probability distribution $\{p_i\}_{i=1}^n$: $Q(v) = \frac{1}{\sqrt{N}} a_i, \text{with probability} ~ p_i$ 
\STATE Get the index of the chosen Gaussian vector $i$
\end{algorithmic}
\textbf{Output}: Index of the chosen Gaussian vector $i$
\end{algorithm}
\begin{theorem}[\cite{gandikota2021vqsgd}, Theorem 7]
\label{theorem_unbiasedgauss}
    A Gaussian codebook of size $n = \exp( O(\frac{d}{\omega} + \log d))$, constructed as in (\ref{unbiasedgaussianmatrix}), with $N=\frac{9d}{\omega}$, $\omega \in [25,36d]$, is an unbiased compressor (definition \ref{def unbiased}), with probability $1-o(1)$. The number of compressed bits  is
    \begin{align}
        b = \log n = O\left(\frac{d}{\omega} + \log d\right). \notag
    \end{align}
\end{theorem}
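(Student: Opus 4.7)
The scheme is unbiased by construction: as soon as we find coefficients $p_i \ge 0$ summing to one with $v = \sum_i p_i c_i$, where $c_i := a_i/\sqrt{N}$, sampling index $i$ with probability $p_i$ gives $\E[Q(v)] = v$. So it suffices to show, with probability $1-o(1)$ over the random codebook, that (i) every unit vector $v$ admits such a convex representation, and (ii) $\sum_i p_i \|c_i\|^2 \le \omega$, since then
\begin{align*}
\E\|Q(v) - v\|^2 \;=\; \sum_i p_i \|c_i\|^2 - \|v\|^2 \;\le\; \omega - 1,
\end{align*}
matching Definition~\ref{def unbiased}.

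For (i), my plan is to invoke the classical inradius estimate for random Gaussian polytopes: the convex hull of $n$ i.i.d.\ standard Gaussian vectors in $\mathbb{R}^d$ contains the origin-centered ball of radius $\Theta(\sqrt{\log(n/d)})$ with probability $1-o(1)$. Dividing by $\sqrt{N}$, the codebook's convex hull contains an origin-centered ball of radius $\Theta(\sqrt{\log(n/d)/N})$. Choosing $n = \exp(\Theta(N) + \log d) = \exp(\Theta(d/\omega) + \log d)$ pushes this inradius above $1$, placing every unit vector inside the convex hull. This simultaneously yields the bit count $b = \log_2 n = O(d/\omega + \log d)$.

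For (ii), I would bound the codeword norms uniformly. Each $\|a_i\|^2$ is $\chi^2_d$-distributed, so by standard tail estimates $\Pr(\|a_i\|^2 > (1+t)d) \le e^{-c\min(t,t^2)d}$ for an absolute constant $c$. A union bound over the $n$ codewords gives $\max_i \|c_i\|^2 \le (1+t)\,d/N = (1+t)\,\omega/9$ simultaneously, with failure probability at most $n\,e^{-c\min(t,t^2)d}$. Since $\log n = O(d/\omega) + \log d$ and $\omega \ge 25$, one can pick a constant $t \le 8$ making both this probability $o(1)$ and $(1+t)\omega/9 \le \omega$; then $\sum_i p_i \|c_i\|^2 \le \max_i \|c_i\|^2 \le \omega$ as required, and combining with (i) finishes the argument.

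The main obstacle is step (i): controlling the inradius of a random Gaussian polytope is the non-elementary ingredient, and its constants must be tight enough to be compatible with the declared $N = 9d/\omega$ and the range $\omega \in [25, 36d]$ from the statement (the lower end $\omega \ge 25$ supplies the slack needed to accommodate both the Jensen lower bound $\sum_i p_i \|c_i\|^2 \ge 1$ and tail deviations of $\|a_i\|^2$, while the upper end keeps $N$ large enough for the inradius bound to be meaningful). Once that geometric lemma is in hand, the remainder reduces to routine $\chi^2$ concentration and union-bound bookkeeping.
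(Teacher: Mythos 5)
Your sketch is correct and follows essentially the same route as the source from which this theorem is quoted (\cite{gandikota2021vqsgd}, Theorem 7), which the present paper cites without reproving: unbiasedness and the bit count $b=O(\frac{d}{\omega}+\log d)$ come from showing that the scaled Gaussian codebook's convex hull contains the unit sphere (the Gaussian-polytope inradius estimate $\Theta(\sqrt{\log(n/d)})$), while the variance condition $\sum_i p_i\|c_i\|^2\le\omega$ follows from uniform $\chi^2$ control of the codeword norms, matching your steps (i) and (ii). The only caveat, which you already flag, is that the inradius lemma carries the entire argument and is invoked as a black box.
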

\begin{remark}
    From the above theorem, the dominiating term is $O(\frac{d}{\omega})$ bits, matching the lower bound in Theorem \ref{theorem_unbiased}. However, the unbiased Gaussian codebook scheme also has the problem of impractical computational and storage burden.
\end{remark}
\subsection{Sparse Randomized Quantization Scheme (SRQS)}
\label{sec:efficient_unbiased}

To design a practical compression method to approach the lower bound, we propose a Sparse Randomized Quantization Scheme (SRQS) based on the well-known rand-$k$ sparsification and randomized quantization in QSGD~\cite{alistarh2017qsgd}. It is easy to implement and we prove it can approach the lower bound except for a logarithmic factor.

Our SRQS method is as follows. The rand-$k$ compressor randomly chooses $k$ elements of the vector $v$ and let other elements be zero. Specifically, for each coordinate $j$ with value $v_j$, with probability $\frac{k}{d}$ the value is set to be $\frac{d}{k} v_j$, and probability $1-\frac{k}{d}$ to be 0. 

For the sparse vector $v_\text{spa}$ obtained from $v$, we perform an unbiased Randomized Quantization as  in \cite{alistarh2017qsgd} to quantize the vector $q=RQ(v_\text{spa})$. This method, called QSGD, works as follows. The quantization function consists of $l$ quantization levels. For each element $v_j$ in the $v_\text{spa}$, the Randomized Quantization function is 
\begin{align}
    RQ(v_j) = \| v_\text{spa} \|_2 \cdot \text{sign}(v_j) \cdot \xi_j(v_\text{spa}, l) \notag
\end{align}
where $\text{sign}(x) \in \{+1,-1\}$ represents the sign of $x$, $\xi_j$ is an independent random variable that determines which quantization level that $v_j$ is mapped to, as defined next. Let $0 \leq t < l$ be an integer such that $\frac{|v_j|}{\|v_\text{spa}\|_2} \in [\frac{t}{l}, \frac{t+1}{l}]$, i.e., $[\frac{t}{l}, \frac{t+1}{l}]$ is the quantization interval for $v_j$. Then the random variable $\xi_j$ is 
\begin{align}
    \xi_j(v_\text{spa}, l) =  \left \{
    \begin{array}{cc}
    (t+1)/l, & \text{with probability} \quad \frac{|v_j|}{\|v_\text{spa}\|} \cdot l - t\\
    t/l,     & \text{otherwise},
    \end{array} \right. \notag
\end{align}
Thus the Randomized Quantization is an unbiased method $\E [RQ(v_\text{spa})] = v_\text{spa}$. 
As in QSGD, the final quantized vector $q$ is expressed by a tuple $(\left\|v_\text{spa}\right\|, s, z)$, where vector $s$ includes the signs of the non-zero elements, $s_i \in \{+1,-1\}$, and $z$ includes the quantization levels of non-zero elements, i.e., $z_j = \xi_j \cdot l \in \{0, 1, \dots, l\}$. 

Overall, the the two-stage compression scheme sequentially applying rand-$k$ and Randomized Quantization, is unbiased.  It can be seen as a more sparse version of QSGD. The detailed process of SRQS is described in Algorithm \ref{SRQS}.

\begin{algorithm}[tb]
\caption{Sparse Randomized Quantization Scheme}
\label{SRQS}
\textbf{Input}: Unit norm vector $v$ \\
\textbf{Initialization}: Design a Randomized Quantization function $RQ(\cdot)$ with $l$ quantization levels
\begin{algorithmic}[1]
    \STATE Apply rand-$k$ compressor on $v$ to output $v_\text{spa}$: For each coordinate $v_j$, there is a probability $\frac{k}{d}$ to be $\frac{d}{k} v_j$, and probability $1-\frac{k}{d}$ to be 0
    \STATE Use Randomized Quantization function on $v_\text{spa}$ to obtain the compressed vector $q = RQ(v_\text{spa})$, the $q$ is expressed a tuple $(\left\|v_\text{spa}\right\|, s, z)$.
\end{algorithmic}
\textbf{Output}: Compressed vector $q$
\end{algorithm}

Now we give a lemma on the accuracy of SRQS.
\begin{lemma}
\label{lemma_accuracy}
    The SRQS achieves compression error
    \begin{align}
        \E \left \| v - \mathcal{Q}(v) \right \|^2 \leq \left[ \frac{d}{k} (1+\frac{k}{l^2}) -1 \right] \left \| v\right \|^2, \notag
    \end{align}
    with rand-$k$ compressor and Randomized Quantization with $l$ quantization levels.
\end{lemma}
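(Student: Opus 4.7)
The plan is to decompose the squared error into a rand-$k$ part and a QSGD quantization part by exploiting that both stages are unbiased (conditional on their input). Writing $q = \mathcal{Q}(v) = RQ(v_{\text{spa}})$ and inserting $\pm v_{\text{spa}}$, I would expand
\begin{align*}
\E \|v - q\|^2 = \E \|v - v_{\text{spa}}\|^2 + 2\,\E\langle v - v_{\text{spa}}, v_{\text{spa}} - q\rangle + \E\|v_{\text{spa}} - q\|^2.
\end{align*}
Conditioning the cross term on $v_{\text{spa}}$ and using $\E[q \mid v_{\text{spa}}] = v_{\text{spa}}$ (unbiasedness of the QSGD quantizer recalled just before the lemma), the inner product vanishes, leaving two terms to bound separately.

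For the rand-$k$ stage, I would compute the second moment coordinatewise: each $v_{\text{spa},j}$ equals $(d/k)v_j$ with probability $k/d$ and $0$ otherwise, so $\E v_{\text{spa},j}^2 = (d/k)v_j^2$, which gives both $\E\|v_{\text{spa}}\|^2 = (d/k)\|v\|^2$ and $\E\|v - v_{\text{spa}}\|^2 = (d/k - 1)\|v\|^2$ after a short calculation that uses $\E v_{\text{spa},j} = v_j$.

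For the QSGD stage, I would invoke the standard variance bound for randomized quantization with $l$ levels: on a vector $w$ with at most $k$ nonzero coordinates, $\E\|RQ(w) - w\|^2 \le (k/l^2)\|w\|^2$ (each nonzero coordinate contributes variance at most $\|w\|^2/l^2$ from the Bernoulli rounding, and there are $k$ of them). Since $v_{\text{spa}}$ has exactly $k$ nonzero entries, conditioning and then taking expectation yields
\begin{align*}
\E \|v_{\text{spa}} - q\|^2 \le \frac{k}{l^2}\,\E\|v_{\text{spa}}\|^2 = \frac{k}{l^2}\cdot \frac{d}{k}\,\|v\|^2 = \frac{d}{l^2}\,\|v\|^2.
\end{align*}

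Adding the two bounds gives $\bigl[(d/k - 1) + d/l^2\bigr]\|v\|^2 = \bigl[(d/k)(1 + k/l^2) - 1\bigr]\|v\|^2$, which is exactly the claim. The only mildly delicate point is the conditional-expectation argument that kills the cross term (one needs to be careful that the rand-$k$ selection and the QSGD randomness are independent so that $\E[q \mid v_{\text{spa}}] = v_{\text{spa}}$ holds); everything else is a routine moment computation plus citation of the QSGD variance bound.
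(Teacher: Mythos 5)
Your proposal is correct and follows essentially the same route as the paper: insert $\pm v_{\mathrm{spa}}$, kill the cross term by conditional unbiasedness of the randomized quantizer, bound the quantization term by $(k/l^2)\|v_{\mathrm{spa}}\|^2$ via the QSGD variance lemma, and use $\E\|v_{\mathrm{spa}}\|^2 = (d/k)\|v\|^2$ for the rand-$k$ stage. The only (cosmetic) difference is that you evaluate $\E\|v - v_{\mathrm{spa}}\|^2 = (d/k-1)\|v\|^2$ directly, whereas the paper expands and regroups before applying the second-moment computation; both yield the identical bound.
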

\begin{proof}
Please note that our scheme incorporates two sources of randomness: the rand-k sparsification and the unbiased Randomized Quantization. Let $\E_{\mathcal{Q}}$ denote the expectation over Randomized Quantization. The total compression error of the SRQS scheme can be expressed as follows:
\begin{align}
\label{error_sqrs}
    & \E \left \| v-\mathcal{Q}(v) \right \|^2 =  \E \left \| v-v_{spa}+v_{spa}-q \right \|^2 \notag \\
    = & \E ( \left \| v-v_{spa} \right \|^2 + 2 \left<  v-v_{spa}, v_{spa}-q \right> + \left \| v_{spa}-q \right \|^2 )\notag \\
    = & \E (\left \| v-v_{spa} \right \|^2 + 2 \E_Q \left< v-v_{spa}, v_{spa}-q \right> + \E_Q \left \| v_{spa}-q \right \|^2 ) \notag \\
    = & \E \left(\left \| v-v_{spa} \right \|^2 + \E_Q \left \| v_{spa}-q \right \|^2 \right) 
\end{align}
where the last equality arises from the unbiased property $\E_Q (v_{spa}-q) = 0$. The error in the second term of (\ref{error_sqrs}) is from the Randomized Quantization. According to Lemma 3.1 in QSGD \cite{alistarh2017qsgd}, we obtain the quantization error as:
\begin{align}
    \E_Q \left \| v_{spa} - q \right \|^2 \leq \min \{ \frac{k}{l^2}, \frac{\sqrt{k}}{l} \} \left \|v_{spa}\right \|^2 \notag
\end{align}
Here we choose $l=\sqrt{2k}$, which ensures $\frac{k}{l^2}\leq 1$. Consequently, we can deduce that $\E_Q \left \| v_{spa} - q \right \|^2 \leq  \frac{k}{l^2} \left \|v_{spa}\right \|^2$.

Then, the total compression error can be expressed as follows:
\begin{align}   
    & \E \left \| v-\mathcal{Q}(v) \right \|^2 \leq  \E( \left \| v-v_{spa} \right \|^2 + \frac{k}{l^2}\left \|v_{spa} \right \|^2 ) \notag \\
    = & \E \left(\left \| v \right \|^2 - 2 \left<v,v_{spa}\right> + (1+\frac{k}{l^2})\left\|v_{spa}\right\|^2 \right) \notag \\
    = & (1+\frac{k}{l^2}) \E \left \|v_{spa}\right\|^2 - \left \|v \right \|^2 \notag
\end{align}
where the last equality is from the unbiased property of rand-$k$ sparsification, $\E v_{spa} = v$.

Given that $v_{spa}$ is obtained through the rand-k sparsification, we have:
\begin{align}
    \E \left \|v_{spa}\right \|^2 = \sum\limits_{j=1}^d \frac{k}{d} (\frac{d}{k} v_j)^2 = \sum\limits_{j=1}^d \frac{d}{k} v_j^2 = \frac{d}{k} \left \|v\right \|^2. \notag
\end{align}
Finally we can get the total compression error as:
\begin{align}
    \E \left \| v - \mathcal{Q}(v) \right \|^2 \leq \left[ \frac{d}{k} (1+\frac{k}{l^2}) -1 \right] \left \| v\right \|^2. \notag
\end{align}
\end{proof}
This lemma indicates that in our scheme, the compression error is given by 
$\omega=\frac{d}{k}(1+\frac{k}{l^2})$. 

To transmit the final compressed vector $q$, we encode and transmit the tuple $(\left\|v_\text{spa}\right\|, s, z)$. In our scheme, we employ Elias coding, similar to QSGD, for this purpose. Elias coding is a method used to encode positive integers in the process, such as the locations of non-zero elements in $q$ and the integer vector $z$.

Now let us present the formal guarantee of SRQS.

\begin{theorem}
\label{theorem_srqs}
    When $\omega \geq 9$, $k=\frac{3d}{2\omega}$, quantization level $l=\sqrt{2k}$, the SRQS is unbiased $\omega$-compressor with the number of bits
   $ 
        b = O(\frac{d}{\omega}\log d \omega). \notag
   $
\end{theorem}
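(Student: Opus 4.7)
The plan is to prove the theorem in two separate steps: first, verify that SRQS with the specified parameters satisfies Definition~\ref{def unbiased}, and second, carefully tally the number of bits produced by the encoding.

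For the compressor property, unbiasedness is immediate from the tower property: rand-$k$ sparsification is unbiased ($\mathbb{E}\, v_{\text{spa}} = v$), and conditional on $v_{\text{spa}}$, the randomized quantization $RQ(\cdot)$ of QSGD is unbiased ($\mathbb{E}_Q[RQ(v_{\text{spa}})] = v_{\text{spa}}$), so $\mathbb{E}[\mathcal{Q}(v)] = v$. For the variance, I would invoke Lemma~\ref{lemma_accuracy} directly. Plugging $l = \sqrt{2k}$ gives $k/l^2 = 1/2$, and $d/k = 2\omega/3$ from $k = 3d/(2\omega)$, so the bound becomes
\begin{equation}
\mathbb{E}\left\|v - \mathcal{Q}(v)\right\|^2 \leq \left[\tfrac{2\omega}{3}\cdot\tfrac{3}{2} - 1\right]\|v\|^2 = (\omega-1)\|v\|^2,
\end{equation}
which is exactly the unbiased $\omega$-compressor condition. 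The side constraint $\omega \geq 9$ is what guarantees $k = 3d/(2\omega) \leq d/6$ is a sensible integer sparsity level and that $l \geq 1$.

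For the bit count, I would describe the payload of the tuple $(\|v_{\text{spa}}\|, s, z)$ piece by piece. The norm $\|v_{\text{spa}}\|$ can be transmitted with $O(1)$ or $O(\log d)$ bits at standard floating-point precision, which is absorbed in the final bound. The sign vector $s$ contributes at most $k$ bits since only the $k$ coordinates chosen by rand-$k$ are nonzero. The locations of these $k$ nonzero coordinates inside $[d]$ can be encoded with Elias (or any prefix-free) coding in $O(k \log(d/k))$ bits, as in the QSGD analysis. Finally the integer levels $z_j \in \{0,1,\ldots,l\}$ use $O(k\log l)$ bits. Substituting $k = 3d/(2\omega)$ and $l = \sqrt{2k} = \sqrt{3d/\omega}$, the location term is $O\!\left(\tfrac{d}{\omega}\log \omega\right)$ and the levels term is $O\!\left(\tfrac{d}{\omega}\log(d/\omega)\right)$. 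Summing,
\begin{equation}
b = O\!\left(\tfrac{d}{\omega}\bigl(\log\omega + \log(d/\omega)\bigr)\right) = O\!\left(\tfrac{d}{\omega}\log d\right) = O\!\left(\tfrac{d}{\omega}\log(d\omega)\right),
\end{equation}
which matches the claim.

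The main obstacle is the bit-counting step: one has to be careful that the Elias-coded location indices really give $O(k \log(d/k))$ bits (rather than a naive $k\log d$), and that the overhead from transmitting $\|v_{\text{spa}}\|$ and the small housekeeping bits does not dominate when $\omega$ is close to the upper endpoint where $d/\omega$ is small. Since the proof of the $O(k\log(d/k))$ bound on Elias-encoded locations is standard (see \cite{alistarh2017qsgd}), this is more a matter of citing the right bound than doing new work; the variance calculation itself is a one-line substitution into Lemma~\ref{lemma_accuracy}.
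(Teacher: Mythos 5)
Your proposal is correct and reaches the same bound, but the bit-accounting is genuinely simpler than the paper's. The variance step is identical in substance: the paper also just reads off $\omega=\frac{d}{k}\bigl(1+\frac{k}{l^2}\bigr)=\frac{3d}{2k}$ from Lemma~\ref{lemma_accuracy} with $l=\sqrt{2k}$, which your substitution $k=\frac{3d}{2\omega}$ turns into exactly $(\omega-1)\|v\|^2$. Where you diverge is the counting: the paper follows the QSGD machinery, bounding $\E\|q\|_0\le l^2+\sqrt{\|v_{\rm spa}\|_0}$ and $\|z\|_2^2\le 2(l^2+\|v_{\rm spa}\|_0)$, Elias-coding both the location gaps and the level vector $z$, and then controlling $\E\bigl[x\log\frac{d}{x}\bigr]$ via concavity/Jensen; you instead observe that the nonzeros of $q$ are a subset of the rand-$k$ support, so $\|q\|_0\le k$, and you spend a flat $O(\log l)$ bits per level, giving $O\bigl(k\log\frac{d}{k}\bigr)+O(k\log l)=O\bigl(\frac{d}{\omega}\log d\bigr)$, which is within the claimed $O\bigl(\frac{d}{\omega}\log d\omega\bigr)$. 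For the parameter choice $l^2=2k$ the two accountings are asymptotically equivalent, and yours avoids two auxiliary QSGD lemmas; the paper's finer treatment of $z$ only pays off in regimes with far fewer levels than nonzeros. Two small points to tighten: (i) as described in Algorithm~\ref{SRQS}, rand-$k$ is the Bernoulli variant, so $\|v_{\rm spa}\|_0$ is random with mean $k$ rather than exactly $k$; either switch to the exactly-$k$ sampler or add the one-line Jensen/monotonicity step for $\E\bigl[\|q\|_0\log\frac{d}{\|q\|_0}\bigr]$, which needs $k\le d/e$; and (ii) the role of $\omega\ge 9$ in the paper is precisely this kind of condition ($3k\le d/2$, i.e., $k\le d/6$) in the bit-count argument, not merely that $k$ is a sensible integer and $l\ge 1$, and the same condition is what licenses your monotonicity bound.
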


The proof is in the Appendix \ref{appen_srqs}.

    As can be seen, the SRQS approaches the lower bound of unbiased compressor, except for a logarithmic factor $\log d\omega$.


\section{Conclusion}
In this paper we compile the number of necessary and sufficient bits required for widely used biased $\delta$-compressor and unbiased $\omega$-compressors. For biased $\delta$-compressor, we propose a random Gaussian codebook scheme to achieve the lower bound, and a Max Block Norm Quantization scheme to approach the lower bound up to a logarithmic factor. For unbiased compressor, we also show an unbiased random Gaussian codebook scheme can achieve the lower bound. And we further propose a practical Sparse Randomized Quantization Scheme to approach the lower bound, up to a logarithmic factor.In short, an application of the simple combination of sparsification and quantization methods on distributed learning leads to near-optimal compression.

\vspace{0.1in}
\noindent\emph{Acknowledgement:} This work is supported in part by NSF awards 2133484, 2217058, and 2112665.

\bibliography{federated}
\bibliographystyle{IEEEbib}


\appendices

\section{Proof of Theorem \ref{theorem_gauss}}
\label{appen_biasedgauss}
\begin{proof}
    We begin by considering a specific fixed unit vector $v$. Let $\gamma^2 = 1-\delta$. For the $i$-th column $a_i$ in $A$, the probability of the event that the distance between $v$ and $a_i$ exceeds $\gamma$ can be expressed as follows:
\begin{align}
    & \text{Pr} \left ( \left \| v- \frac{1}{\sqrt{N}} a_i \right\|^2 > \gamma^2 \right ) = \text{Pr} \left ( 1 + \frac{1}{N} \|a_i\|^2 -\frac{2}{\sqrt{N}} \left<v,a_i\right> > \gamma^2 \right ). \notag
\end{align}
Since $\left \| v\right \|^2=1$, $a_i$ is the vector with each element being a standard Gaussian variable, the inner-product $\left<v,a_i\right>$ is a Gaussian variable. We denote it as $z = \left<v,a_i\right>$ and $ z \sim \mathcal{N}(0,1)$. Thus, we can write the probability as:
\begin{align}
    & \text{Pr} \left ( 1+ \frac{1}{N} \|a_i\|^2 -\frac{2}{\sqrt{N}} z > \gamma^2 \right ) = \text{Pr} \left( z < \frac{\sqrt{N}}{2} \left(1+ \frac{1}{N} \|a_i\|^2-\gamma^2 \right) \right) \notag 
\end{align}
Since $\|a_i\|^2$ follows chi-squared distribution with degree $d$, we can get the tail bound of $\|a_i\|^2$ as
\begin{align}
    \text{Pr} \left( \left| \frac{1}{d}\|a_i\|^2 - 1 \right| \geq t \right) \leq 2 \exp (-\frac{dt^2}{8}) \notag
\end{align}
We define the event $A$ as $A = \left\{ z < \frac{\sqrt{N}}{2} \left(1+ \frac{1}{N} \|a_i\|^2-\gamma^2 \right) \right\}$, event $B_1$ as $B_1=\left\{ \left| \frac{1}{d}\|a_i\|^2 - 1 \right| \geq t \right\}$, and $B_2=\left\{ \left| \frac{1}{d}\|a_i\|^2 - 1 \right| < t \right\}$. Then we have
\begin{align}
    \text{Pr} (A) = \text{Pr} (A, B_1) + \text{Pr} (A, B_2) \leq \text{Pr} (B_1) + \text{Pr}(A, B_2) \notag
\end{align}
From event $\{A,B_2\}$, we can imply that $z < \frac{\sqrt{N}}{2} \left(1+ \frac{d}{N} (1+t)-\gamma^2 \right)$. Thus we know that
\begin{align}
    &\text{Pr} (B_1) \leq 2 \exp (-\frac{dt^2}{8}) \notag \\
    &\text{Pr}(A, B_2) \leq \text{Pr} \left( z < \frac{\sqrt{N}}{2} \left(1+ \frac{d}{N} (1+t)-\gamma^2 \right) \right) = 1- \text{Pr} \left( z \geq \frac{\sqrt{N}}{2} \left(1+ \frac{d}{N} (1+t)-\gamma^2 \right) \right). \notag
\end{align}
Let $E = \frac{\sqrt{N}}{2}(1+ \frac{d}{N}(1+t)-\gamma^2)$. By choosing $N=\frac{d}{\delta}$ and recalling that $\gamma^2 = 1-\delta$, we obtain $E = (1+\frac{t}{2})\sqrt{d\delta}$.

For the standard Gaussian variable $z$, the tail bound is
\begin{align}
\text{Pr} (z\geq x) \geq \frac{C}{x} e^{-x^2/2},  \quad \text{when} \quad x \geq 1  .\notag
\end{align}
where $C = \frac{1}{2\sqrt{2\pi}}$. We can see $E>1$ from the chosen N, thus we can have
\begin{align}
    &\text{Pr} \left( z \geq E  \right) \geq  \frac{C}{E} \exp (-\frac{E^2}{2}). \notag
\end{align}
Combining above results, we have
\begin{align}
    & \text{Pr} \left ( \left \| v- \frac{1}{\sqrt{N}} a_i \right\|^2 > \gamma^2 \right ) \leq 1 - \frac{C}{E} \exp(-\frac{(2+t)^2d\delta}{8} ) + 2\exp (-\frac{dt^2}{8}). \notag
\end{align}
In our setting, $\delta$ is small and can approach 0, thus we can choose a constant $t$ to satisfy $(2+t)^2 \delta < t^2$. We can further let 
\begin{align}
     2\exp (-\frac{dt^2}{8}) \leq \frac{C}{2E} \exp(-\frac{(2+t)^2d\delta}{8} ) \notag
\end{align}
Taking logrithm on both sides, we can get
\begin{align}
\label{condition1}
    4 \log d\delta + 8 \log \frac{2+t}{2} + 4\log 128\pi \leq d [t^2 - (2+t)^2\delta]
\end{align}
Since the left-hand side of (\ref{condition1}) is at order $O(\log d\delta)$, and the right-hand side is at order $d$. When we choose t to satisfy $(2+t)^2\delta < t^2$ and $d$ is large, the condition (\ref{condition1}) can easily hold.

Then we can obtain
\begin{align}
    \text{Pr} \left ( \left \| v- \frac{1}{\sqrt{N}} a_i \right\|^2 > \gamma^2 \right ) \leq 1 - \frac{C}{2E} \exp (-\frac{(2+t)^2d\delta}{8} ) . \notag
\end{align}
Since there are $n$ i.i.d. random Gaussian vectors, we define the event that for all Gaussian vectors, there is no close vector to the particular $v$:
\begin{align}
    \mathcal{E}_v = \left \{ \forall i \in [n], \quad \left \| v- \frac{1}{\sqrt{N}} a_i \right\|^2 > \gamma^2 \right \}. \notag
\end{align}
The probability of event $\mathcal{E}_v$ is
\begin{align}
    \text{Pr}[\mathcal{E}_v] & \leq  \left[ 1 - \frac{C}{2E}\exp (-\frac{(2+t)^2d\delta}{8} )  \right]^n \notag \\
    & \leq  \exp \left[ -\frac{C}{2E}n\exp (-\frac{(2+t)^2d\delta}{8} ) \right], \notag
\end{align}
where the second inequality is from the fact that $1-x\leq e^{-x}$.

To encompass all possible $v$ on the unit sphere, we employ an $\epsilon$-net to cover the unit sphere, where we set the error parameter $\epsilon$ equal to $\gamma$. Consequently, the size of the $\epsilon$-net is given by $\left ( 1+\frac{2}{\gamma}\right)^d$ \cite{vershynin2010introduction}. Subsequently, we define the event that for all Gaussian vectors, there are no close vectors to any of the vectors in the $\epsilon$-net:
\begin{align}
    \mathcal{E} = \left \{ \forall i \in [n], \forall v \in \epsilon ~\text{net}, \quad \left \| v- \frac{1}{\sqrt{N}} a_i \right\|^2 > \gamma^2 \right \}. \notag
\end{align}
By union bound, the possibility of event $\mathcal{E}$ is
\begin{align}
    \text{Pr}[\mathcal{E}] \leq & \left ( 1+\frac{2}{\gamma}\right)^d \exp \left[ -\frac{C}{2E}n\exp (-\frac{(2+t)^2d\delta}{8} )  \right]  \notag \\
    \leq & \exp(\frac{2d}{\gamma}) \exp \left[ -\frac{C}{2E}n\exp (-\frac{(2+t)^2d\delta}{8} )  \right]  \notag \\
    = & \exp \left[ \frac{2d}{\gamma} -\frac{C}{2E}n\exp (-\frac{(2+t)^2d\delta}{8} )\right], \notag
\end{align}
where the second inequality is from the fact that $1+x\leq e^x$.

Here we expect that the exponential term is negative, so that $\text{Pr}[\mathcal{E}]$ approaches 0. Hence, we need to ensure the following condition holds:
\begin{align}
    \frac{2d}{\gamma} \leq \frac{C}{2E}n\exp (-\frac{(2+t)^2d\delta}{8} ). \notag
\end{align}
Taking logarithm on both sides, we get
\begin{align}
\label{ineq1}
    \log n \geq & \frac{(2+t)^2d\delta}{8} +\log 4d -\log \gamma - \log \frac{C}{E} \notag \\
    = &  \frac{(2+t)^2d\delta}{8} + \frac{1}{2}\log \left( \frac{32\pi d^3\delta(2+t)^2}{1-\delta} \right). \notag
\end{align}
When 
$
    \log n \geq O(d\delta) \notag
$
 holds, the probability $\text{Pr}[\mathcal{E}]$ approaches to 0. Then the communication cost of the random Gaussian codebook scheme is 
\begin{align}
    b = \log n = O(d\delta). \notag
\end{align}
The scheme is a $\delta$-compressor with  probability at least
\begin{align}
    1 - \text{Pr}[\mathcal{E}] = 1 -  \exp \left[ -\frac{n}{2\sqrt{2\pi}(2+t)\sqrt{d\delta}} \exp(-\frac{(2+t)^2 d\delta}{8}) + \frac{2d}{\sqrt{1-\delta}} \right] \notag
\end{align}
\end{proof}

\section{Proof of Theorem \ref{theorem_srqs}}
\label{appen_srqs}

During the transmission of compressed vectors, we use the Elias coding to encode positive integers. Before proving Theorem \ref{theorem_srqs}, we need to introduce a lemma from \cite{alistarh2017qsgd} that demonstrates the number of bits needed to represent a vector after Elias coding.
\begin{lemma} [\cite{alistarh2017qsgd}, Appendix, Lemma A.3]
\label{lemma_elias}
Let $y\in \mathbb{N}^d$ be a vector of which each element $y_i$ is a positive integer, and its $\ell_p$-norm is $\left \| y \right \|^p_p \leq \rho$, then we have
\begin{align}
    \sum\limits_{i=1}^d | \text{Elias}(y_i) | \leq \left( \frac{1+o(1)}{p} \log \frac{\rho}{d} + 1 \right) d
\end{align}
where $\text{Elias}(\cdot)$ is the Elias coding function applied to a positive integer. 
\end{lemma}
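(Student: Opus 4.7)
The plan is to prove the bound by combining two ingredients: the per-integer bit-length bound for Elias coding, and Jensen's inequality applied to the concavity of $\log$. The starting point is the well-known property of Elias (e.g., omega) coding that for any positive integer $n$, the codeword length satisfies $|\text{Elias}(n)| \leq (1+o(1))\log n + 1$, where the additive $1$ absorbs the boundary case $n=1$ (for which $\log n = 0$ but the codeword has positive length) and the $o(1)$ vanishes as $n \to \infty$. I would either cite this from the standard universal-integer-coding literature or verify it directly from the recursive structure of Elias omega.

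Given the per-coordinate estimate, summing over $i \in [d]$ yields
\begin{equation*}
\sum_{i=1}^d |\text{Elias}(y_i)| \;\leq\; (1+o(1))\sum_{i=1}^d \log y_i \;+\; d.
\end{equation*}
To handle $\sum_i \log y_i$, I would rewrite $\log y_i = \tfrac{1}{p}\log y_i^p$ and apply Jensen's inequality to the concave function $\log$ on the uniform average of the nonnegative numbers $y_i^p$:
\begin{equation*}
\sum_{i=1}^d \log y_i \;=\; \frac{d}{p}\cdot\frac{1}{d}\sum_{i=1}^d \log y_i^p \;\leq\; \frac{d}{p}\log\!\Bigl(\frac{1}{d}\sum_{i=1}^d y_i^p\Bigr) \;\leq\; \frac{d}{p}\log\frac{\rho}{d},
\end{equation*}
where the last step uses the hypothesis $\|y\|_p^p \leq \rho$. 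Substituting this back gives
\begin{equation*}
\sum_{i=1}^d |\text{Elias}(y_i)| \;\leq\; \frac{1+o(1)}{p}\log\frac{\rho}{d}\cdot d + d \;=\; \Bigl(\frac{1+o(1)}{p}\log\frac{\rho}{d} + 1\Bigr)d,
\end{equation*}
which is the desired inequality.

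The main obstacle is the first step: pinning down the per-integer codelength bound with a leading constant of $1+o(1)$ rather than $2$. Elias gamma only gives $2\lfloor\log n\rfloor + 1$, so the proof really must use Elias delta or omega (or a similar universal code) to obtain the $(1+o(1))\log n$ coefficient; writing this out carefully requires unrolling the recursive length recurrence for omega. A minor secondary point worth noting is that $\log(\rho/d)$ is automatically nonnegative under the hypothesis, because $y_i \in \mathbb{N}$ forces $y_i^p \geq 1$ and hence $\rho \geq d$, so the right-hand side is genuinely an upper bound and not a vacuous one. Apart from that, the Jensen step is routine and the concluding algebra is mechanical.
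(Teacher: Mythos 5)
Your argument is correct, and it matches the standard proof of this statement: the paper itself does not prove the lemma but imports it verbatim from the QSGD paper (Lemma A.3 there), whose proof is exactly your two ingredients --- the per-integer Elias codelength bound $|\text{Elias}(n)| \leq (1+o(1))\log n + 1$ followed by Jensen's inequality on the concave $\log$ applied to the average of the $y_i^p$. Your closing caveats (that one needs Elias delta/omega rather than gamma to get the $1+o(1)$ leading constant, and that $\rho \geq d$ makes the bound non-vacuous) are also accurate and are the only points where any care is needed.
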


Now we give the proof of Theorem \ref{theorem_srqs}.
\begin{proof}
    After rand-$k$ sparsification and Randomized Quantization, the compressed vector $q$ becomes very sparse. To transmit $q$, we first need to send the locations of $\|q\|_0$ non-zero elements. Let $i_1, i_2, \dots, i_{\|q\|_0}$ represent the non-zero indices of $q$. We use Elias coding to encode the integer vector $[i_1, i_2-i_1, \dots, i_{\|q\|_0} - i_{\|q\|_0 -1}]$. This integer vector has a length of $\|q\|_0$ and an $\ell_1$-norm at most $d$. Thus from Lemma \ref{lemma_elias} the required number of bits after Elias coding is given by
\begin{align}
    b_1 = \left( (1+o(1)) \log \frac{d}{\|q\|_0} + 1 \right) \|q\|_0 .\notag
\end{align}
Secondly, we need to transmit the sign vector $s$ and the integer vector $z$. For sign vector $s$ with $\|q\|_0$ elements, we need $b_2 = \|q\|_0$ bits. For the integer vector $z$, we also use Elias coding to encode it. The required number of bits after Elias coding is 
\begin{align}
    b_3 = \left( \frac{1+o(1)}{2} \log \frac{\left\|z\right\|_2^2}{\|q\|_0} + 1 \right) \|q\|_0. \notag
\end{align}

From \cite{alistarh2017qsgd}, we know for Randomized Quantization, the number of non-zero elements is
\begin{align}
    \E \|q\|_0 \leq l^2 + \sqrt{\|v_{spa}\|_0} \notag
\end{align}
and the squared norm of $z$ is 
\begin{align}
    \|z\|^2_2 \leq 2(l^2+\|v_{spa}\|_0) .\notag
\end{align}
Summing up the three parts $b_1, b_2, b_3$, we can get the total number of bits
\begin{align}
    \E b = 3 \E \|q\|_0 + \E (1+o(1)) \|q\|_0 \left( \log \frac{d}{\|q\|_0} + \frac{1}{2}\log\frac{2(l^2+\|v_{spa}\|_0)}{\|q\|_0} \right) \notag
\end{align}
Note that the function $x\log\frac{C}{x}$ increases until $x=\frac{C}{2}$ and then decreases. Also function $x\log\frac{C}{x}$ is concave so that $\E \left[ x\log\frac{C}{x} \right] \leq \E x \log \frac{C}{\E x}$. Assuming $l^2+\|v_{spa}\|_0 \leq \frac{d}{2}$, it follows that $l^2+\sqrt{\|v_{spa}\|_0} \leq \frac{d}{2}$. Applying $x = l^2+\sqrt{\|v_{spa}\|_0}$, $C=d$ in the function, we can obtain
\begin{align}
    \E b \leq & 3 \E \|q\|_0 + \E (1+o(1)) \|q\|_0 \left( \frac{3}{2} \log \frac{d}{\|q\|_0}  \right) \notag \\
    \leq & 3 \E\left(l^2+\sqrt{\|v_{spa}\|_0}\right) + \E(1+o(1))\left(l^2+\sqrt{\|v_{spa}\|_0}\right) \left( \frac{3}{2} \log \frac{d}{l^2+\sqrt{\|v_{spa}\|_0}} \right) \notag
\end{align}
where the first inequality is from $l^2+\|v_{spa}\|_0 \leq \frac{d}{2}$, the second inequality is from the property of function $x\log\frac{C}{x}$.

From the rand-$k$ sparsification, we know that $\E (l^2 + \sqrt{\|v_{spa}\|_0}) \leq l^2 + \sqrt{k}$. Assume $l^2 + k \leq \frac{d}{2}$. Applying the property of function $x\log\frac{C}{x}$ again, and let $x=l^2 + \sqrt{\|v_{spa}\|_0}$, $C=d$, we can have
\begin{align}
    \E b \leq & 3 (l^2+\sqrt{k}) + \frac{3}{2}(1+o(1))(l^2+\sqrt{k}) \log \frac{d}{l^2+\sqrt{k}} \notag \\
    = & (l^2+\sqrt{k}) \left[ 3 + \frac{3}{2} (1+o(1)) \log \frac{d}{l^2+\sqrt{k}} \right]. \notag
\end{align}

From Lemma \ref{lemma_accuracy}, our scheme has parameter $\omega = \frac{d}{k} (1+\frac{k}{l^2})$. Here We choose $l = \sqrt{2k}$, then we have $\omega = \frac{3}{2}\frac{d}{k}$.

Finally the communication cost of SQRS is
\begin{align}
    \E b \leq (2k + \sqrt{k}) \left[ 3 + \frac{3}{2} (1+o(1)) \log \frac{d}{2k+\sqrt{k}} \right]. \notag
\end{align}
The dominating term is $O(k\log \frac{d}{\sqrt{k}})$, i.e., $O(\frac{d}{\omega}\log d\omega)$.

To satisfy the condition $l^2 + \|v_{spa}\|_0 \leq l^2 + k \leq \frac{d}{2}$, we require
\begin{align}
    l^2 + k = 3k \leq \frac{d}{2}. \notag
\end{align}
Thus $k \leq \frac{d}{6}$, and it follows that $\omega \geq 9$.

When $\omega \geq 9$, and we choose $k=\frac{3d}{2k}$, $l=\sqrt{2k}$, then we can achieve unbiased compressor with communication cost $O(\frac{d}{\omega}\log d\omega)$ bits.
\end{proof}

\end{document}